\renewcommand{\c}[1]{^{(#1)}}
\newcommand{\ext}{^{(0)}}
\newcommand{\diffparam}{k_c}
\newcommand{\degparam}{k_d}
\newcommand{\T}{^\mathrm{T}}
\newcommand{\Real}{\mathbb{R}}
\renewcommand{\operatorname}[1]{\mathop{\mathrm{#1}}}
\renewcommand{\det}{\operatorname{det}}
\newcommand{\normx}{\Vert x \Vert}
\newcommand{\normy}{\Vert y \Vert}
\newtheorem{lemma}{Lemma}
\newtheorem{theorem}{Theorem}
\newtheorem{prop}{Proposition}
\newtheorem{assum}{Assumption}
\title{\LARGE \bf
Network-level dynamics of diffusively coupled cells%
\thanks{This work was supported by the Cluster of Excellence in Simulation Technology at the University of Stuttgart and by grant WA 2965/1-1 from the German research foundation (DFG).}
}
\author{Steffen Waldherr and Frank Allg\"ower
\thanks{The authors are with the Institute for Systems Theory and Automatic Control,
        University of Stuttgart, 70550 Stuttgart, Germany
        {\tt\small waldherr@ist.uni-stuttgart.de, allgower@ist.uni-stuttgart.de}}%
}
\date{Published in Proceedings of the 51st IEEE Conference on Decision and Control (CDC), 2012, pp.\ 5517-5522. (C) IEEE.}
\begin{document}

\maketitle
\thispagestyle{empty}
\pagestyle{empty}

\begin{abstract}
  We study molecular dynamics within populations of diffusively coupled cells under the assumption of fast diffusive exchange.
  As a technical tool, we propose conditions on boundedness and ultimate boundedness for systems with a singular perturbation, which extend the classical asymptotic stability results for singularly perturbed systems.
  Based on these results, we show that with common models of intracellular dynamics, the cell population is coordinated in the sense that all cells converge close to a common equilibrium point.

  We then study a more specific example of coupled cells which behave as bistable switches, where the intracellular dynamics are such that cells may be in one of two equilibrium points.
  Here, we find that the whole population is bistable in the sense that it converges to a population state where either all cells are close to the one equilibrium point, or all cells are close to the other equilibrium point.
  Finally, we discuss applications of these results for the robustness of cellular decision making in coupled populations.

\end{abstract}

\section{INTRODUCTION}

Coordinated behavior in cell populations is a long standing topic in the analysis of biological systems.
While so far mainly the synchronization of biological oscillators has been in the focus of this research \cite{MirolloStr1990,StanHam2007,LangMar2011}, coordinated behavior is also of interest for other types of cellular behavior.
The study presented in this paper is motivated by the analysis of biological switches, which are a common phenomenon in biological systems such as gene regulation or signal transduction \cite{FerrellXio2001,EissingWal2007}.
In contrast to oscillators, synchronization among biological switches has hardly been studied so far.
Wang and coauthors \cite{WangLu2011} present a model of a genetic switch in a cell population where one of the molecules making up the switch is diffusively exchanged among cells via the extracellular medium.
However, \cite{WangLu2011} is concerned more with a numerical study of the effects of noise on synchronized switching than the underlying mechanism of switch synchronization per se.
A more theoretically oriented study in this area has been performed by Schmidt and coauthors \cite{SchmidtWu2010}.
There, conditions for consensus among directly coupled, scalar bistable switches are proven using Lyapunov techniques.
While no specific biological model is studied in \cite{SchmidtWu2010}, the analysis is motivated from the occurence of bistable dynamics in biological systems.

Here, we apply singular perturbation theory to study populations of diffusively coupled cells for diffusion rates which are significantly faster than the intracellular dynamics.
Although there are long-standing results for the convergence analysis of singularly perturbed system \cite{SaberiKha1984,CorlessGli1992}, these studies considered only systems where the perturbation does not affect the position of the equilibrium point, and then proceeded to show asymptotic or exponential stability of this equilibrium point for a sufficiently small perturbation.
In the model we are going to construct here, the equilibrium point will in fact depend on the perturbation, and the previous results are not applicable.
We can however prove a new result on boundedness and ultimate boundedness of singularly perturbed systems, which in general is applicable to cells coupled via diffusive exchange of molecules with a common extracellular medium.
As a specific application, we consider a population of coupled bistable switches.
The theoretical results and a simulation study show coordinated behavior and synchronized switching among the coupled switches for fast enough diffusive exchange rates.

The paper is structured as follows.
In Section~\ref{sec:bound-singular-perturbation}, we present conditions on boundedness and ultimate boundedness of singularly perturbed systems.
In Section~\ref{sec:cp-dynamics}, we construct a generic model for a population of diffusively coupled cells.
The previous conditions are then applied to show convergence to the neighborhood of a common equilibrium point for all individual cells.
Finally, in Section~\ref{sec:simulation}, we consider a population of diffusively coupled bistable switches, and show that the population behaves as a single switch with all cells in coordination for fast enough diffusive exchange.

\section{BOUNDEDNESS IN SINGULARLY PERTURBED SYSTEMS}
\label{sec:bound-singular-perturbation}

\subsection{Conditions for ultimate boundedness}

The analysis of the coupled switches with singular perturbation requires an extension of singular perturbation theory.
Previous singular perturbation results either assumed that the equilibrium point does not vary with the small parameter $\varepsilon$ \cite{SaberiKha1984}, or that the dependence on $\varepsilon$ of the dynamics tends to zero as the state approaches the equilibrium point \cite{CorlessGli1992}, and then proceeded to show exponential stability for this equilibrium point.
In this study, these assumptions are not satisfied and it will not be possible to show exponential stability of a fixed equilibrium point.
Instead, we will show convergence of the singularly perturbed system to a neighborhood of the equilibrium state.
By decreasing the small parameter $\varepsilon$, this neighborhood can be made arbitrarily small, thus recovering the classical result in the limit.

Consider the system
\begin{equation}
  \label{eq:orig-system}
  \begin{aligned}
    \dot x &= f(x,z) \\
    \varepsilon \dot z &= g(x,z,\varepsilon)
  \end{aligned}
\end{equation}
with $x \in D_x \subset \Real^n$, $D_x$ convex, $z \in D_z \subset \Real^m$, $\varepsilon > 0$, and functions $f$ and $g$ continuously differentiable in all of their arguments and such that a unique solution $x(t) \in D_x$, $z(t) \in D_z$ exists for all $t \geq 0$.
Initial conditions are taken as $x(0) = x_0$ and $z(0) = z_0$.

Our first assumption is that a quasi-steady-state solution for the fast dynamics can be obtained, and that the system \eqref{eq:orig-system} has an equilibrium point at $\varepsilon = 0$.
\begin{assum}
There exists a continously differentiable function $h: D_x \rightarrow D_z$ such that
\begin{equation}
  \label{eq:qss}
  g(x,h(x),0) = 0
\end{equation}
for all $x \in D_x$,
and $h$ satisfies the linear growth bound
\begin{equation}
  \label{eq:h-bound}
  \Vert \frac{\partial h}{\partial x}(x) \Vert \leq c_1
\end{equation}
for a positive constant $c_1$ for all $x \in D_x$.
Furthermore, for $\varepsilon = 0$, system \eqref{eq:orig-system} has an equilibrium point at $(0, h(0))$, i.e.,
\begin{equation}
\label{eq:ss-eq}
\begin{aligned}
  f(0,h(0)) &= 0 \\
  g(0,h(0),0) &= 0.
\end{aligned}
\end{equation}
\end{assum}

For the next step, the reduced order systems for $\varepsilon = 0$ are defined: the slow system for the variable $x$ and the boundary-layer or fast system for the variable
\begin{equation}
\label{eq:fast-variable}
y(\tau) = z(\tau) - h(x_0)
\end{equation}
on the fast time scale $\tau = t / \varepsilon$.
The slow system is defined as
\begin{equation}
  \label{eq:slow-system}
  \dot x = f(x,h(x)),
\end{equation}
and the boundary-layer system as
\begin{equation}
  \label{eq:fast-system}
  \frac{dy}{d\tau} = g(x_0, y+h(x_0), 0)
\end{equation}
where $x_0$ is considered as a constant parameter.
Also, define
\begin{equation}
  \label{eq:delta-f}
  \Delta_f(x,z) = f(x,z) - f(x,h(x))
\end{equation}
and
\begin{equation}
  \label{eq:delta-g}
  \Delta_g(x,z,\varepsilon) = g(x,z,\varepsilon) - g(x,z,0)
\end{equation}

We have the following assumptions on the reduced order systems and the functions $f$ and $g$:
\begin{assum}
  There are a scalar function $V_S(x)$ and positive
  constants $c_2$, $d_1$, $d_2$, and $D$ such that
  \begin{equation}
    \label{eq:VS-constraint}
    d_1 \Vert x \Vert^2 \leq V_S(x) \leq d_2 \Vert x \Vert^2
  \end{equation}
  and
  \begin{equation}
    \label{eq:lyap1}
    \frac{\partial V_S}{\partial x} f(x,h(x)) \leq - c_2 \Vert x \Vert^2
  \end{equation}
  for all $x \in \Omega_x$, a compact subset of $D_x$, with $\{ d_1 \Vert x \Vert^2 \leq D \} \subset \Omega_x$.
\end{assum}

\begin{assum}
  There are a scalar function $V_B(x,y)$ and positive constants $c_3$, $d_3$,
  and $d_4$, such that
  \begin{equation}
    \label{eq:VB-constraint}
    d_3 \Vert y \Vert^2 \leq V_B(x,y) \leq d_4 \Vert y \Vert^2
  \end{equation}
  and
  \begin{equation}
    \label{eq:lyap2}
    \frac{\partial V_B}{\partial y} g(x,y+h(x),0)  \leq - c_3 \Vert y \Vert^2
  \end{equation}
  for all $x \in \Omega_x$, $y$ such that $y+h(x) \in D_z$.
\end{assum}

\begin{assum}
  There exist constants $c_4$, $c_5$, $c_6$, $c_7$, $c_8$, $c_9$, $c_{10}$, $c_{11}$, and $c_{12}$,
  such that
  \begin{equation}
    \label{eq:lyap3-8}
    \begin{aligned}
      \frac{\partial V_S}{\partial x} \Delta_f(x,y+h(x)) &\leq c_4 \Vert x \Vert \Vert y \Vert \\
      \frac{\partial V_B}{\partial x} f(x,y+h(x)) &\leq c_5 \Vert y \Vert^2 + c_6 \Vert x \Vert \Vert y \Vert \\
      \frac{\partial V_B}{\partial y} \Delta_g(x,y+h(x),\varepsilon) &\leq \varepsilon ( c_7 \Vert y \Vert^2 + c_8 \Vert x \Vert \Vert y \Vert + c_9 \Vert y \Vert) \\
      \vert \frac{\partial V_B}{\partial y} \frac{\partial h}{\partial x} f(x,y+h(x)) \vert &\leq c_{10} \Vert y \Vert^2 + c_{11} \Vert x \Vert \Vert y \Vert + c_{12} \Vert y \Vert
    \end{aligned}
  \end{equation}
  for all $x \in D_x$, $y$ such that $y+h(x) \in D_z$.
\end{assum}

These assumptions are equivalent to the assumptions in the classical literature, just that here the terms $c_9 \Vert y \Vert$ and $c_{12} \Vert y \Vert$ are added.
Referring to the discussion at the beginning of this section, these terms are relevant for cases where $g$ and $f$ are not sector bounded in $\Vert x \Vert$ and $\Vert y \Vert$, but may have a bounded offset in $\varepsilon$.

First, we show that $x$ will remain within $\Omega_x$ in the perturbed system, so that \eqref{eq:lyap1} can be used later on.
\begin{lemma}
  \label{lemma:omega_x}
  Under assumptions 1--4 above, there exists $\varepsilon^{\prime}$ such that, for all $\varepsilon < \varepsilon^{\prime}$, the solution $x(t)$ of \eqref{eq:orig-system} stays in $\Omega_x$ for all $t \geq 0$ and any initial condition where $x_0 \in \{ d_2 \Vert x \Vert^2 \leq \varrho D \}$, with $0 \leq \varrho < 1$.
\end{lemma}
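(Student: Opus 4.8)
The plan is to control a composite Lyapunov function $\nu(x,y) = (1-d)V_S(x) + d\,V_B(x,y)$ for a weight $d \in (0,1)$ fixed later, working in the shifted fast coordinate $y = z - h(x)$ so that Assumptions 3 and 4 apply verbatim. The observation that makes everything self-consistent is that, since $V_B \geq 0$, every sublevel set $\Omega_c = \{(x,y) : \nu(x,y) \leq c\}$ satisfies $(1-d)V_S(x) \leq \nu(x,y) \leq c$, hence $V_S(x) \leq c/(1-d)$. Choosing $c = (1-d)D$ therefore forces $V_S(x) \leq D$, and since $V_S(x) \leq D$ together with \eqref{eq:VS-constraint} gives $d_1\Vert x\Vert^2 \leq D$, the $x$-projection of $\Omega_c$ lies inside $\{d_1\Vert x\Vert^2 \leq D\} \subset \Omega_x$. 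Consequently, once I show $\Omega_c$ is forward invariant and contains the initial condition, the claim follows, and the derivative bounds \eqref{eq:lyap1} and \eqref{eq:lyap2}, valid only on $\Omega_x$, are available throughout $\Omega_c$, so there is no circularity.

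Next I would differentiate $\nu$ along \eqref{eq:orig-system}. For $V_S$, writing $f(x,y+h(x)) = f(x,h(x)) + \Delta_f(x,y+h(x))$ and combining \eqref{eq:lyap1} with the first bound of \eqref{eq:lyap3-8} yields $\dot V_S \leq -c_2\Vert x\Vert^2 + c_4\Vert x\Vert\Vert y\Vert$. For $V_B$ I would use $\dot y = \tfrac{1}{\varepsilon}g(x,y+h(x),\varepsilon) - \tfrac{\partial h}{\partial x}f(x,y+h(x))$, split $g = g(x,y+h(x),0) + \Delta_g(x,y+h(x),\varepsilon)$, and apply \eqref{eq:lyap2} together with the last three bounds of \eqref{eq:lyap3-8}, producing
\[ \dot V_B \leq -\Big(\tfrac{c_3}{\varepsilon} - (c_5+c_7+c_{10})\Big)\Vert y\Vert^2 + (c_6+c_8+c_{11})\Vert x\Vert\Vert y\Vert + (c_9+c_{12})\Vert y\Vert. \]
The linear term $(c_9+c_{12})\Vert y\Vert$ is precisely the new feature flagged after Assumption 4; it is what obstructs a clean negative-definite conclusion and forces a boundedness rather than a stability statement.

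Combining the two estimates gives
\[ \dot\nu \leq -(1-d)c_2\Vert x\Vert^2 + \big[(1-d)c_4 + d(c_6+c_8+c_{11})\big]\Vert x\Vert\Vert y\Vert - d\Big(\tfrac{c_3}{\varepsilon}-(c_5+c_7+c_{10})\Big)\Vert y\Vert^2 + d(c_9+c_{12})\Vert y\Vert. \]
Because the $\Vert y\Vert^2$ coefficient grows like $c_3/\varepsilon$, for any fixed $d$ the discriminant of the quadratic form in $(\Vert x\Vert,\Vert y\Vert)$ turns negative once $\varepsilon$ is small enough, so that part is dominated by $-\lambda(\Vert x\Vert^2 + \Vert y\Vert^2)$ for some $\lambda>0$. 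Completing the square on the $\Vert y\Vert^2$ and $\Vert y\Vert$ terms absorbs the linear term at the price of an additive constant of order $\varepsilon$, giving $\dot\nu \leq -\lambda(\Vert x\Vert^2 + \Vert y\Vert^2) + \Ord(\varepsilon)$. Hence the set $\{\dot\nu \geq 0\}$ sits inside a ball of radius $\Ord(\sqrt{\varepsilon})$, i.e.\ inside $\{\nu \leq \Ord(\varepsilon)\}$; for $\varepsilon$ small this lies strictly in the interior of $\Omega_c$, so $\dot\nu < 0$ on the boundary $\{\nu = c\}$ and $\Omega_c$ is forward invariant.

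Finally I would check the initial condition: $x_0 \in \{d_2\Vert x\Vert^2 \leq \varrho D\}$ gives $V_S(x_0) \leq d_2\Vert x_0\Vert^2 \leq \varrho D$, so $(1-d)V_S(x_0) \leq (1-d)\varrho D$, and it remains to fit $d\,V_B(x_0,y_0)$ into the slack $(1-d)(1-\varrho)D$, which is possible by taking $d$ small, provided $V_B(x_0,y_0)$ (equivalently $\Vert z_0 - h(x_0)\Vert$) is bounded — the only condition this imposes on $z_0$, automatic on a bounded $D_z$. Fixing such a $d$ first and then the corresponding $\varepsilon^{\prime}$, invariance of $\Omega_c$ yields $x(t) \in \{V_S \leq D\} \subset \Omega_x$ for all $t \geq 0$. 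The main obstacle is the third step: one must simultaneously use small $\varepsilon$ to restore negative definiteness of the quadratic form and to shrink the offset created by the new linear term, and then confirm that the resulting bad set retreats into the interior of the fixed sublevel set; the shift $y = z - h(x)$ and the nonnegativity of $V_B$ are exactly what keep the $\Omega_x$-restricted bounds usable throughout the trajectory.
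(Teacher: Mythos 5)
Your proposal is correct, but it follows a genuinely different route from the paper's own proof of Lemma~\ref{lemma:omega_x}. The paper's proof invokes the classical singular perturbation result \cite[Th.~11.2]{Khalil2002} to get the a priori bound $\Vert y(t) \Vert \leq \gamma_1 e^{-\gamma_2 t/\varepsilon} + \gamma_3 \varepsilon$, and then runs a Lyapunov argument on $V_S$ \emph{alone}, treating the fast deviation as a time-dependent disturbance: on an initial interval $[0,T]$ the excursion of $x$ is small because the exponential term integrates to $\Ord(\varepsilon)$, and for $t \geq T$ one has $\dot V_S < 0$ on the boundary $\{ d_2 \Vert x \Vert^2 = D \}$. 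You instead build the composite function $\nu = (1-d)V_S + d V_B$ --- essentially the object the paper reserves for the proof of Theorem~\ref{theorem:boundedness} --- and prove forward invariance of its sublevel set $\{\nu \leq (1-d)D\}$, using nonnegativity of $V_B$ to nest the $x$-projection of that set inside $\{d_1 \Vert x \Vert^2 \leq D\} \subset \Omega_x$; that containment is what legitimizes using \eqref{eq:lyap1} and \eqref{eq:lyap2} along the whole trajectory, and it is a clean, self-contained resolution of the restricted-domain issue (your quadratic-form domination and completion of squares on the $(c_9+c_{12})\Vert y\Vert$ term reproduce, in a sublevel-set setting, exactly the $A$/$B$ computation of the paper's Theorem~\ref{theorem:boundedness}). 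What each approach buys: the paper's proof is short but outsources the transient analysis to a cited theorem whose hypotheses must still be matched to Assumptions 1--4; yours avoids the citation entirely and effectively shows that lemma and theorem can be handled by one invariance argument. The price is that your weight $d$ must be chosen from the initial fast deviation, so you need $\Vert z_0 - h(x_0) \Vert$ bounded (uniformly, if $\varepsilon^{\prime}$ is to be uniform over initial conditions), which you correctly flag; note this is not a real loss relative to the paper, since the constant $\gamma_1$ in the cited bound likewise depends on the initial fast deviation --- the lemma's statement is silent about $z_0$, and neither proof covers unbounded $\Vert z_0 - h(x_0)\Vert$ uniformly.
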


\begin{proof}
  From the classical results on singular perturbation \cite[Th.\ 11.2]{Khalil2002}, it can be established that the solution of \eqref{eq:orig-system} satisfies $\Vert y(t) \Vert \leq \gamma_1 e^{- \frac{\gamma_2 t}{\varepsilon}} + \gamma_3 \varepsilon$ with some positive constants $\gamma_{1,2,3}$, for $t \geq 0$.
Considering $V_S$ as a Lyapunov function for \eqref{eq:orig-system}, we find
\begin{equation}
  \label{eq:VS-orig-derivative}
  \begin{aligned}
    \dot V_S &= \frac{\partial V_S}{\partial x} (f(x,h(x)) + \Delta_f(x,y+h(x))) \\
    &\leq - c_2 \Vert x \Vert^2 + c_4 \Vert x \Vert (\gamma_1 e^{- \frac{\gamma_2 t}{\varepsilon}} + \gamma_3 \varepsilon).
  \end{aligned}
\end{equation}
Thus there exists $T > 0$, such that $x(t) \in \{ d_2 \Vert x \Vert^2 \leq D\}$ for $t \in [0,T]$ and sufficiently small $\varepsilon$.
For $t \geq T$, we have $\dot V_S \leq - c_2 \Vert x \Vert^2 + c_4 \Vert x \Vert \gamma_4 \varepsilon$ and thus $\dot V_S < 0$ whenever $d_2 \Vert x \Vert^2 = D$, for sufficienly small $\varepsilon$.
This implies that $x(t) \in \Omega_x$ for all $t \geq 0$.
\end{proof}

\begin{theorem}
  \label{theorem:boundedness}
  Suppose that assumptions 1--4 above hold and let $\delta$, $\gamma$ such that $0 <\gamma, \delta < 1$.
  Define $w(t)\T = (x(t), z(t)-h(0))\T$, where $(x(t), z(t))$ is a solution of system \eqref{eq:orig-system}.
  Let $x_0 \in \{ d_2 \Vert x \Vert^2 \leq \varrho D \}$, with $0 \leq \varrho < 1$.
  Then there exists $\varepsilon^\ast(\gamma,\delta) > 0$ such that for any $\varepsilon < \varepsilon^\ast(\gamma,\delta)$, 
  \begin{equation}
    \label{eq:bounds}
    \begin{aligned}
      \Vert w(t) \Vert^2 \leq  \beta(\Vert w(0) \Vert^2,t) + \mu(\varepsilon),
    \end{aligned}
  \end{equation}
  where $\beta$ is a class $\mathcal{KL}$ function, i.e., strictly increasing in the first argument, strictly decreasing in the second one, and $\beta(r^2,\infty)=\beta(0,t)=0$, and $\mu(\varepsilon) \rightarrow 0$ as $\varepsilon \rightarrow 0$.
\end{theorem}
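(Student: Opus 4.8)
The plan is to adapt the composite-Lyapunov-function argument of classical singular perturbation theory, but to track carefully the single non-sector remainder (the $\Vert y \Vert$ offsets with coefficients $c_9,c_{12}$) that obstructs asymptotic stability and leaves only ultimate boundedness. First I would work in the boundary-layer coordinate $y = z - h(x)$ rather than in $w$ directly, recording that the two descriptions are equivalent: since $z - h(0) = y + (h(x)-h(0))$ and $\Vert h(x)-h(0)\Vert \le c_1\Vert x\Vert$ by \eqref{eq:h-bound} together with convexity of $D_x$, there are constants with $m_1(\Vert x\Vert^2+\Vert y\Vert^2) \le \Vert w\Vert^2 \le m_2(\Vert x\Vert^2+\Vert y\Vert^2)$. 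Hence it suffices to bound $\Vert x\Vert^2+\Vert y\Vert^2$. By Lemma~\ref{lemma:omega_x}, taking $\varepsilon$ small keeps $x(t)\in\Omega_x$ for all $t$, so \eqref{eq:lyap1} is available along the trajectory. I then introduce the composite function $V = (1-d)V_S(x) + d\,V_B(x,y)$ with a weight $d\in(0,1)$ to be fixed; by \eqref{eq:VS-constraint} and \eqref{eq:VB-constraint} it satisfies $m_3(\Vert x\Vert^2+\Vert y\Vert^2)\le V \le m_4(\Vert x\Vert^2+\Vert y\Vert^2)$.

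Next I would differentiate $V$ along \eqref{eq:orig-system}, using $\dot y = \tfrac1\varepsilon g(x,z,\varepsilon) - \tfrac{\partial h}{\partial x}f(x,z)$ and the splittings \eqref{eq:delta-f}, \eqref{eq:delta-g}. Feeding in \eqref{eq:lyap1}, \eqref{eq:lyap2} and the four estimates of \eqref{eq:lyap3-8}, and writing $u=\Vert x\Vert$, $v=\Vert y\Vert$, everything collapses into
\[
\dot V \le -(1-d)c_2 u^2 - d\Bigl(\tfrac{c_3}{\varepsilon}-a_1\Bigr)v^2 + p\,uv + d\,a_3\,v,
\]
with $a_1=c_5+c_7+c_{10}$, $a_2=c_6+c_8+c_{11}$, $a_3=c_9+c_{12}$, and $p=(1-d)c_4+d\,a_2$. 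The crucial features are the $\tfrac1\varepsilon$ weight on $-v^2$ (inherited from the fast Lyapunov bound) and the single linear remainder $d\,a_3\,v$, which is exactly the footprint of the $\varepsilon$-dependent equilibrium.

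The heart of the argument is to balance these terms. For $\varepsilon$ small the quantity $b' := d(\tfrac{c_3}{\varepsilon}-a_1) - \tfrac{p^2}{2(1-d)c_2}$ is large and positive, so absorbing the cross term by Young's inequality gives $\dot V \le -\tfrac{(1-d)c_2}{2}u^2 - b' v^2 + d\,a_3\,v$. Absorbing the linear offset as $d\,a_3\,v \le \tfrac{b'}{2}v^2 + \tfrac{(d\,a_3)^2}{2b'}$ then leaves $\dot V \le -\tfrac{(1-d)c_2}{2}u^2 - \tfrac{b'}{2}v^2 + \kappa(\varepsilon)$ with $\kappa(\varepsilon)=\tfrac{(d\,a_3)^2}{2b'}=\Ord(\varepsilon)$, since $b'\sim d\,c_3/\varepsilon$. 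Using the quadratic bounds on $V$ and observing that the slow direction pins an $\varepsilon$-independent rate, this yields $\dot V \le -\rho V + \kappa(\varepsilon)$ with $\rho = c_2/(2d_2)$ independent of $\varepsilon$. The comparison lemma gives $V(t)\le e^{-\rho t}V(0) + \kappa(\varepsilon)/\rho$, and converting back through the norm equivalences defines the class-$\mathcal{KL}$ function $\beta(r,t)=\bigl(m_2 m_4/(m_1 m_3)\bigr)e^{-\rho t}r$ (linear in $r$, hence strictly increasing with $\beta(0,t)=0$, and $\beta(r,\infty)=0$) and the residual $\mu(\varepsilon)=\Ord(\varepsilon)\to 0$. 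The free parameters $d$ and the admissible range of $\varepsilon$ are where the thresholds $\gamma,\delta$ enter, constraining the weight and the smallness of $\varepsilon$ required for $b'>0$ and for Lemma~\ref{lemma:omega_x}.

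The main obstacle I anticipate is precisely the bookkeeping that forces $\mu(\varepsilon)\to 0$: unlike the classical case, the linear term $d\,a_3\,v$ cannot be dominated by the negative quadratic terms near the origin, so naively one is left with a fixed positive offset. The key realization is that this offset must be absorbed specifically against the $v^2$ term whose coefficient blows up like $1/\varepsilon$, so that the leftover constant scales like $\varepsilon$; absorbing it against the $\varepsilon$-independent $u^2$ term would yield only a constant offset and destroy the conclusion. Getting the $\varepsilon$-dependence of $b'$ and $\kappa$ right, while keeping the decay rate $\rho$ bounded away from zero uniformly in $\varepsilon$, is the delicate step; the remainder is routine completion of squares and the comparison lemma.
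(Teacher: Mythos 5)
Your proof is correct, and it shares the paper's skeleton: the same change of variables $y = z - h(x)$ leading to \eqref{eq:transformed-system}, the same composite Lyapunov function $V = (1-\delta)V_S + \delta V_B$ (your $d$ is the paper's $\delta$), the same appeal to Lemma~\ref{lemma:omega_x} to keep $x(t) \in \Omega_x$ so that \eqref{eq:lyap1} and \eqref{eq:lyap2} apply along trajectories, and the same norm equivalence between $(x,y)$ and $w$ for the final back-transformation. Where you genuinely diverge is the endgame. The paper organizes the $\dot V$ estimate as $-(\normx,\normy)\,A(\gamma,\delta,\varepsilon)\,(\normx,\normy)\T + \delta B$, with $\gamma$ reserving a fraction of both negative quadratic terms for the remainder $B$; it then proves positive definiteness of $A$ for small $\varepsilon$, invokes the appendix Lemma~\ref{lemma:B-negative} to show $B \le 0$ outside a ball of radius $R(\varepsilon) \to 0$, and concludes by citing classical boundedness and ultimate-boundedness theorems, so its $\tilde\beta$ and $\tilde\mu$ remain implicit. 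You instead complete squares twice: Young's inequality absorbs the cross term, and then the linear offset $(c_9 + c_{12})\normy$ is absorbed specifically into the $\Ord(1/\varepsilon)$-weighted $-\normy^2$ term --- the same key insight that drives the paper's Lemma~\ref{lemma:B-negative}, namely that absorbing against the $\varepsilon$-independent $\normx^2$ term would leave a non-vanishing offset. This yields $\dot V \le -\rho V + \kappa(\varepsilon)$ with $\rho$ independent of $\varepsilon$ and $\kappa(\varepsilon) = \Ord(\varepsilon)$, and the comparison lemma gives an explicit exponential class-$\mathcal{KL}$ bound and an explicit $\mu(\varepsilon) = \Ord(\varepsilon)$; that is sharper and more self-contained than the paper's conclusion (no appendix lemma, no external ultimate-boundedness result), at the harmless price of making $\gamma$ superfluous, since your threshold $\varepsilon^\ast$ depends only on $\delta$ while the theorem merely asserts existence of $\varepsilon^\ast(\gamma,\delta)$. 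One incidental point in your favor: your expansion carries the terms $c_{10}\normy^2 + c_{11}\normx\normy + c_{12}\normy$ without the factor $\varepsilon$ that the paper attaches to them; since $\dot y$ in \eqref{eq:transformed-system} contains $-\frac{\partial h}{\partial x} f$ with no $\varepsilon$ prefactor, your bookkeeping is the consistent one, and the discrepancy is immaterial because those terms are dominated for small $\varepsilon$ either way.
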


\begin{proof}
We first construct a transformed system by letting $y = z - h(x)$, and denote the state of the transformed system as $v\T = (x, y)\T$.
From \eqref{eq:orig-system}, the transformed dynamics are
\begin{equation}
  \label{eq:transformed-system}
  \begin{aligned}
    \dot x &= f(x, y+h(x)) \\
    \varepsilon \dot y &= g(x,y+h(x),\varepsilon) - \varepsilon \frac{\partial h}{\partial x} f(x,y+h(x)).
  \end{aligned}
\end{equation}

For the transformed system, consider the Lyapunov function candidate
  \begin{equation}
    \label{eq:lyap-candidate}
    V(x,y) = (1-\delta) V_S(x) + \delta V_B(x,y).
  \end{equation}

Let us first check that $V$ satisfies quadratic norm constraints from below and above.
In fact, from \eqref{eq:VS-constraint} and \eqref{eq:VB-constraint}, we directly have
\begin{equation*}
  (1-\delta) d_1 \Vert x \Vert^2 + \delta d_3 \Vert y \Vert^2 \leq V(x,y)
\end{equation*}
and
\begin{equation*}
  V(x,y) \leq (1-\delta)d_2\Vert x \Vert^2 + \delta d_4 \Vert y \Vert^2.
\end{equation*}

Next, let us compute the derivative of $V(x,y)$ along trajectories of the system \eqref{eq:transformed-system}.
Using the bounds in Assumptions 2--4 and Lemma \ref{lemma:omega_x}, we obtain
\begin{equation*}
  \begin{aligned}
    \dot V(x,y) &\leq -(1-\delta)c_2 \normx^2 + (1-\delta) c_4 \normx \normy + \delta c_5 \normy^2 \\
    &\quad+ \delta c_6 \normx \normy - \frac{\delta}{\varepsilon} c_3 \normy^2 + \delta c_7 \normy^2 + \delta c_8 \normx \normy \\
    &\quad+ \delta c_9 \normy + \varepsilon \delta c_{10} \normy^2 + \varepsilon \delta c_{11} \normx \normy + \varepsilon \delta c_{12} \normy \\
    &\leq - (\normx, \normy) A(\gamma,\delta,\varepsilon) (\normx, \normy)\T \\
    &\quad + \delta B(x, y, \gamma, \varepsilon)\\
  \end{aligned}
\end{equation*}
with
\begin{small}
  \begin{equation*}
    A(\gamma,\delta,\varepsilon) =
    \begin{pmatrix}
      (1-\delta(1-\gamma))c_2 & -\frac{c_4 + \delta(c_6+c_8-c_4+\varepsilon c_{11})}{2} \\
      -\frac{c_4 + \delta(c_6+c_8-c_4+\varepsilon c_{11})}{2} &
      \frac{\delta c_3 (1-\gamma)}{\varepsilon} - \delta (c_7 +
      \varepsilon c_{10})
    \end{pmatrix}
  \end{equation*}
\end{small}
and
\begin{equation*}
  B(x, y, \gamma, \varepsilon) = (c_9 + \varepsilon c_{12}) \normy - \frac{c_3 \gamma}{\varepsilon} \normy^2 - \gamma c_2 \normx^2
\end{equation*}

Clearly, if $\delta$ and $\gamma$ satisfy the conditions in Theorem~\ref{theorem:boundedness}, we can find $\varepsilon^{\prime\prime}(\delta,\gamma) > 0$ such that $A(\gamma,\delta,\varepsilon)$ is positive definite for all positive $\varepsilon < \varepsilon^{\prime\prime}(\gamma,\delta)$.
By considering the determinant and trace of $A$, one can even compute suitable $\varepsilon^{\prime\prime}$, but the result is a bit longish and skipped here.

From Lemma~\ref{lemma:B-negative} in the Appendix, we conclude that $B(x,y,\gamma,\varepsilon) \leq 0$ whenever $\Vert v \Vert^2 = \normx^2 + \normy^2 \geq R(\varepsilon)$, with $R(\varepsilon) \rightarrow 0$ as $\varepsilon \rightarrow 0$.

Thus, from results on boundedness and ultimate boundedness \cite{CorlessLei1981,Khalil2002}, we can conclude that solutions of \eqref{eq:transformed-system} are bounded by
\begin{equation}
  \label{eq:trafo-sys-bounds}
  \Vert v(t) \Vert^2 \leq \tilde\beta(\Vert v(0) \Vert^2,t) + \tilde\mu(\varepsilon),
\end{equation}
where $\tilde\beta$ is a class $\mathcal{KL}$ function.
From Lemma~\ref{lemma:B-negative}, we have $\mu(\varepsilon) \rightarrow 0$ as $\varepsilon \rightarrow 0$.

Finally, let us transform the bound \eqref{eq:trafo-sys-bounds} back to the original variable $w\T = (x,z-h(0))\T$.
From assumption 1 and a Taylor expansion of $h(x)-h(0)$, with convexity of $D_x$, we have
\begin{equation*}
  \begin{aligned}
    \Vert w \Vert^2 &\leq \normx^2 + \Vert z - h(x) \Vert^2 + \Vert h(x) - h(0) \Vert^2 \\
    &\leq \normx^2 + \normy^2 + c_1^2 \normx^2 \\
    &\leq (1+c_1^2) \Vert v \Vert^2.
  \end{aligned}
\end{equation*}
Similarly,
\begin{equation*}
  \begin{aligned}
    \Vert v \Vert^2 &\leq \normx^2 + \Vert z - h(0) \Vert^2 + \Vert h(0) - h(x) \Vert^2 \\
    &\leq (1+c_1^2) \Vert w \Vert^2.
  \end{aligned}
\end{equation*}
Also, note that $a \leq b$ implies that $\tilde\beta(a,t) \leq \tilde\beta(b,t)$.
Thus, we conclude
\begin{equation}
  \label{eq:final-w-bound}
  \Vert w(t) \Vert^2 \leq (1+c_1^2) \bigl(\tilde\beta((1+c_1^2)\Vert w(0) \Vert^2,t) + \tilde\mu(\varepsilon)\bigr).
\end{equation}
\end{proof}

\section{DYNAMICS OF A COUPLED CELL POPULATION}
\label{sec:cp-dynamics}

\subsection{Modeling}
\label{sec:cp-modeling}

We consider a population of $N$ cells which exchange molecules with a joint extracellular medium via diffusion.
The concentrations of these molecules within cell $i$ are denoted by the vector $\xi\c{i} \in \Real^n_+$.
The intracellular dynamics, neglecting coupling effects, are assumed to be given by the differential equation
\begin{equation}
  \label{eq:cellmodel}
  \dot \xi\c{i} = F(\xi\c{i}),
\end{equation}
The vector field $F(\xi)$ is assumed to be continously differentiable.

Next, let us turn to the diffusive coupling.
The concentration of the molecules in the extracellular medium is denoted by $\xi\ext \in \Real^n_+$.
Diffusive exchange between cells and the extracellular medium is described by a flux rate which is linear in the concentration difference $\xi\c{i} - \xi\ext$, with a diffusion rate constant $\diffparam \geq 0$.
Due to volume differences between the intracellular and the extracellular compartments, a scaling factor needs to be introduced.
We will assume that all cells have identical volume $V_c$ and that the volume of the medium is $N V_c$, i.e., each cell shares a fixed part of the medium equal to its own volume.
In addition, we include a linear removal rate for the considered molecules from the extracellular medium, for example by degradation or an outflow of medium, with a rate constant $\degparam \geq 0$.

The overall dynamics of the coupled population are then given by the following system of differential equations.
\begin{equation}
  \label{eq:popmodel}
  \begin{aligned}
    \dot \xi\ext &= - \degparam \xi\ext + \sum_{j=1}^N \frac{\diffparam}{N V_c} (\xi\c{j} - \xi\ext) \\
    \dot \xi\c{i} &= F(\xi\c{i}) - \frac{\diffparam}{V_c} (\xi\c{i} - \xi\ext), \quad i=1,\dotsc,N.
  \end{aligned}
\end{equation}

\subsection{Analysis via singular perturbation theory}
\label{sec:singpert-popmodel}


We are going to analyse the population model \eqref{eq:popmodel} with the theory developed in Section~\ref{sec:bound-singular-perturbation}.
Since we are interested in the limiting case of fast diffusive exchange between cells and the extracellular medium, a reasonable small parameter is $\varepsilon = \diffparam^{-1}$.
First, we have to bring the model into the standard form given in \eqref{eq:orig-system}.

To this end, consider the variable
\begin{equation}
  \label{eq:trafo}
  x = \xi\ext + \frac{1}{N} \sum_{j=1}^N \xi\c{j},
\end{equation}
which from \eqref{eq:popmodel} is subject to the dynamics
\begin{equation*}
  \dot x = -\degparam \xi\ext + \frac{1}{N} \sum_{j=1}^N F(\xi\c{j}).
\end{equation*}
In order to have the notation as in Section~\ref{sec:bound-singular-perturbation}, we also denote $z\c{i} = \xi\c{i}$, $i=1,\dotsc,N$.
Thus, the transformed model with variables $x$ and $z$ is
\begin{equation}
  \label{eq:popmodel-transformed}
  \begin{aligned}
    \dot x &= -\degparam (x - \frac{1}{N} \sum_{j=1}^N z\c{j}) + \frac{1}{N} \sum_{j=1}^N F(z\c{j}) \\
    \varepsilon \dot z\c{i} &= \varepsilon F(z\c{i}) - \frac{1}{V_c}\bigl(z\c{i} - x + \frac{1}{N} \sum_{j=1}^N z\c{j}\bigr),
  \end{aligned}
\end{equation}
which is in the standard form for singular perturbation analysis with small parameter $\varepsilon = \diffparam^{-1}$.

To simplify notation, we introduce $f$ and $g$ such that \eqref{eq:popmodel-transformed} writes as
\begin{equation}
  \label{eq:fg-cp}
  \begin{aligned}
    \dot x &= f(x,z) \\
    \varepsilon \dot z &= g(x,z,\varepsilon).
  \end{aligned}
\end{equation}

Next, we derive the slow and boundary-layer models for the population model \eqref{eq:popmodel-transformed} according to \eqref{eq:slow-system} and \eqref{eq:fast-system}.
The first step is to solve the equation $g(x,z,0) = 0$ to obtain the quasi-steady state solution $z = h(x)$.
From the model~\eqref{eq:popmodel-transformed}, this yields the system of $n N$ linear algebraic equations
\begin{equation}
  \label{eq:limit-algebraic}
  z\c{i} - x + \frac{1}{N} \sum_{j=1}^N z\c{j} = 0, \quad i = 1,\dotsc,N,
\end{equation}
which has to be solved for $z\c{i}$ in order to obtain $h$.
One can verify directly that one solution of \eqref{eq:limit-algebraic} is
\begin{equation}
  \label{eq:singular-manifold}
  z\c{i} = \frac{x}{2},\quad i=1,\dotsc,N.
\end{equation}
To show that \eqref{eq:singular-manifold} is the unique solution, we have the following result.

\begin{lemma}
  \label{lem:algebraic-solution}
  The matrix $E_N = \frac{1}{N} \underline{\mathbf{1}} + I \in \Real^{N\times N}$, where $\underline{\mathbf{1}}$ is a $N \times N$ matrix of all ones, has one eigenvalue at $2$ and $N-1$ eigenvalues at $1$.
\end{lemma}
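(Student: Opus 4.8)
The plan is to reduce the claim to the well-known spectrum of the rank-one all-ones matrix and then observe that adding the identity merely shifts every eigenvalue by one. Since $E_N = \frac{1}{N}\underline{\mathbf{1}} + I$, if $\lambda$ is an eigenvalue of $\frac{1}{N}\underline{\mathbf{1}}$ with eigenvector $v$, then $E_N v = (\lambda + 1)v$, so it suffices to compute the spectrum of $\frac{1}{N}\underline{\mathbf{1}}$ and add one to each eigenvalue.

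First I would analyze $\underline{\mathbf{1}}$ directly via its eigenvectors. Let $\mathbf{1} = (1,\dotsc,1)\T \in \Real^N$ denote the all-ones vector. Because every row of $\underline{\mathbf{1}}$ sums to $N$, we have $\underline{\mathbf{1}}\,\mathbf{1} = N\mathbf{1}$, so $\mathbf{1}$ is an eigenvector with eigenvalue $N$. On the other hand, $\underline{\mathbf{1}}$ has rank one, since all of its columns coincide with $\mathbf{1}$; hence its kernel has dimension $N-1$. Concretely, any vector $v$ whose entries sum to zero satisfies $\underline{\mathbf{1}} v = 0$, and these vectors span the $(N-1)$-dimensional orthogonal complement of $\mathbf{1}$. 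Thus $\underline{\mathbf{1}}$ has the eigenvalue $N$ with multiplicity one and the eigenvalue $0$ with multiplicity $N-1$, which together account for all $N$ eigenvalues.

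Dividing by $N$, the matrix $\frac{1}{N}\underline{\mathbf{1}}$ then has the eigenvalue $1$ (with eigenvector $\mathbf{1}$) of multiplicity one and the eigenvalue $0$ of multiplicity $N-1$. Applying the shift from the first step, $E_N$ has the eigenvalue $2$ with multiplicity one and the eigenvalue $1$ with multiplicity $N-1$, as claimed. There is no genuine obstacle in this argument: the only point requiring a moment's care is verifying that the zero-eigenspace of the all-ones matrix attains the full dimension $N-1$, and this follows immediately from the rank being one, so no characteristic-polynomial computation is needed.
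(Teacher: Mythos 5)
Your proof is correct and follows essentially the same route as the paper's: identify $\mathbf{1}$ as the eigenvector of the all-ones matrix for its nonzero eigenvalue, use the rank-one property to get the $N-1$ zero eigenvalues, and then shift the spectrum by one upon adding $I$. The only cosmetic difference is that you realize the shift via the eigenvector identity $E_N v = (\lambda+1)v$ (noting diagonalizability through the explicit eigenbasis), whereas the paper phrases it through the characteristic polynomial $\det((\lambda-1)I - \tfrac{1}{N}\underline{\mathbf{1}}) = 0$; both are equally rigorous.
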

\begin{proof}
  First, let us compute the eigenvalues $\tilde\lambda$ of $\frac{1}{N} \underline{\mathbf{1}}$.
  From
  \begin{equation*}
    \frac{1}{N} \underline{\mathbf{1}} \mathbf{1} = \mathbf{1},
  \end{equation*}
  where $\mathbf{1} \in \Real^N$ is the vector of all ones, we can verify that one eigenvalue is $\tilde\lambda_1 = 1$.
  Since $\frac{1}{N} \underline{\mathbf{1}}$ is a rank one matrix, the other eigenvalues are $\tilde\lambda_{2\ldots N} = 0$.

  The eigenvalues $\lambda$ of $E_N$ are computed by
  \begin{equation*}
    \det((\lambda - 1) I - \frac{1}{N} \underline{\mathbf{1}}) = 0,
  \end{equation*}
  yielding
  \begin{equation*}
    \begin{aligned}
      \lambda_1 &= \tilde\lambda_1 +1 = 2 \\
      \lambda_{2\ldots N} &= \tilde\lambda_{2\ldots N} + 1 = 1.
    \end{aligned}
  \end{equation*}
\end{proof}

Note that \eqref{eq:limit-algebraic} is actually a system of $n$ decoupled linear equations in the $N$ variables $z\c{i}_k$, $i=1,\dotsc,N$, for each of $k=1,\dotsc,n$.
With some abuse of notation, we have for each component of $z\c{i}$
\begin{equation*}
  E_N z\c{\ast}_k = x_k.
\end{equation*}
Applying Lemma~\ref{lem:algebraic-solution}, since $E_N$ does not have zero eigenvalues, we conclude that this equation has a unique solution for each component $z\c{i}_k$, and thus the algebraic equation~\eqref{eq:limit-algebraic} has the unique solution given by \eqref{eq:singular-manifold}.

The slow model is then obtained by substituting \eqref{eq:singular-manifold} into $f(x,z)$, yielding
\begin{equation}
  \label{eq:slow-model}
  \dot x = f(x,h(x)) = F\bigl( \frac{x}{2} \bigr) - \degparam \frac{x}{2}.
\end{equation}

The boundary-layer model describes the dynamics of the deviation of the fast variables $z\c{i}$ from their slow-time-scale approximation \eqref{eq:singular-manifold}.
As described in Section~\ref{sec:bound-singular-perturbation}, the boundary-layer model is formulated in the variable
\begin{equation}
  \label{eq:fast-variables}
  y\c{i} = z\c{i} - \frac{x}{2}.
\end{equation}
Using the model equations \eqref{eq:popmodel-transformed}, the boundary-layer model is obtained as
\begin{equation}
  \label{eq:boundary-model}
  \frac{dy\c{i}}{d\tau} = - \frac{1}{V_c} (y\c{i} + \frac{1}{N} \sum_{j=1}^N y\c{j}), \qquad i=1,\dotsc,N.
\end{equation}

\begin{prop}
  \label{prop:population-dynamics}
  Assume that $\bar x$ is an equilibrium point of the slow system \eqref{eq:slow-model}, i.e., $F(\bar x/2) - k_d \bar x/2 = 0$.
  Let the following conditions be satisfied:
  \begin{enumerate}
  \item $F$ is globally Lipschitz, i.e., $\Vert F(\xi) - F(\xi^\prime) \Vert \leq L \Vert \xi - \xi^\prime \Vert$ for $\xi, \xi^\prime \in \Real^n_+$.
  \item Assumption 2 from above is satisfied for the function $f(x,h(x)) = F(x/2) - k_d x/2$ for $D_x = \Real^n_+$, where $\Vert x \Vert$ is replaced by $\Vert x - \bar x \Vert$ to account for the non-zero equilibrium point, and the Lyapunov function $V_S(x)$ satisfies
    \begin{equation}
      \label{eq:vs-extra-condition}
      \Vert \frac{\partial V_s}{\partial x}(x) \Vert \leq c \Vert x - \bar x \Vert
    \end{equation}
    for a constant $c > 0$.

  \end{enumerate}
  Define $w(t) = (\xi\c{i}(t)-\bar x/2)_{i=0,\dotsc,N}$, where $\xi\c{i}(t)$ is a solution of \eqref{eq:popmodel}.
  Let $x_0 \in \{ d_2 \Vert x - \bar x \Vert^2 \leq \varrho D \}$, with $0 \leq \varrho < 1$.
  Then there exists $\varepsilon^\ast > 0$ such that for all $\varepsilon < \varepsilon^\ast$, the solutions of the system~\eqref{eq:popmodel} are bounded by
  \begin{equation}
    \label{eq:bound-popmodel}
    \Vert w(t) \Vert \leq \beta(\Vert w(0) \Vert, t) + \mu(\varepsilon),
  \end{equation}
  and $\mu(\varepsilon) \rightarrow 0$ as $\varepsilon \rightarrow 0$.
\end{prop}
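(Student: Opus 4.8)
The plan is to apply Theorem~\ref{theorem:boundedness} to the transformed population model \eqref{eq:popmodel-transformed}, after shifting the slow coordinate by the equilibrium, i.e.\ working with $x-\bar x$ in place of $x$ (the fast variable $y\c{i} = z\c{i}-x/2$ already vanishes at the equilibrium, so it needs no shift). It then suffices to verify Assumptions~1--4 for the pair $(f,g)$ from \eqref{eq:fg-cp}, reading $\normx$ as $\Vert x-\bar x\Vert$ throughout, and to translate the resulting bound back to the physical variables $w$. For Assumption~1, the quasi-steady-state map is $h(x)=(x/2,\dotsc,x/2)\T$, whose uniqueness was already settled via Lemma~\ref{lem:algebraic-solution}; since $h$ is linear its Jacobian is the constant stacked matrix $\tfrac12(I,\dotsc,I)\T$, so \eqref{eq:h-bound} holds with, e.g., $c_1=\sqrt N/2$. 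The shifted equilibrium condition \eqref{eq:ss-eq} reduces to $f(\bar x,h(\bar x))=F(\bar x/2)-k_d\bar x/2=0$, which is exactly the standing hypothesis on $\bar x$.

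Assumption~2 is imposed directly as hypothesis~(2), together with the gradient bound \eqref{eq:vs-extra-condition}. For Assumption~3 I would exploit that the boundary-layer model \eqref{eq:boundary-model} is \emph{linear}, of the form $\tfrac{dy}{d\tau}=-\tfrac1{V_c}(E_N\otimes I_n)y$, and that by Lemma~\ref{lem:algebraic-solution} the matrix $E_N$ is symmetric positive definite with smallest eigenvalue $1$. Hence $V_B(y)=\tfrac12\Vert y\Vert^2$ gives $\frac{\partial V_B}{\partial y}g(x,y+h(x),0)=-\tfrac1{V_c}y\T(E_N\otimes I_n)y\leq -\tfrac1{V_c}\Vert y\Vert^2$, so \eqref{eq:lyap2} holds with $c_3=1/V_c$. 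Because this $V_B$ does not depend on $x$, the second line of \eqref{eq:lyap3-8} holds with $c_5=c_6=0$.

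The main obstacle is Assumption~4, whose cross-terms must be estimated using only the global Lipschitz constant $L$ and the bound \eqref{eq:vs-extra-condition}. Writing $z\c{j}=y\c{j}+x/2$ and using $\tfrac1N\sum_j\Vert y\c{j}\Vert\leq \tfrac1{\sqrt N}\Vert y\Vert$, the increment $\Delta_f$ collects $-k_d\tfrac1N\sum_j y\c{j}$ and $\tfrac1N\sum_j(F(y\c{j}+x/2)-F(x/2))$, each of size a constant times $\Vert y\Vert$, giving the first bound with $c_4$ proportional to $c(k_d+L)$. Since $\Delta_g(x,z,\varepsilon)=\varepsilon(F(z\c{i}))_i$, one has $\frac{\partial V_B}{\partial y}\Delta_g=\varepsilon\sum_i(y\c{i})\T F(y\c{i}+x/2)$, and bounding $\Vert F(y\c{i}+x/2)\Vert$ by $L\Vert y\c{i}\Vert+\tfrac{L}{2}\Vert x-\bar x\Vert+\Vert F(\bar x/2)\Vert$ produces exactly the structure $\varepsilon(c_7\Vert y\Vert^2+c_8\Vert x-\bar x\Vert\Vert y\Vert+c_9\Vert y\Vert)$; the analogous estimate for $\frac{\partial V_B}{\partial y}\frac{\partial h}{\partial x}f$ supplies the last line. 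The crucial point, and the reason the classical results do not apply, is that the offset terms $c_9\Vert y\Vert$ and $c_{12}\Vert y\Vert$ are genuinely nonzero here, originating in $F(\bar x/2)=k_d\bar x/2\neq 0$ and $\bar x\neq 0$; these are precisely the bounded offsets that the extension of Section~\ref{sec:bound-singular-perturbation} was designed to absorb.

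With Assumptions~1--4 in hand, Theorem~\ref{theorem:boundedness} yields a bound of the form \eqref{eq:bounds} on $\Vert v\Vert^2=\Vert x-\bar x\Vert^2+\Vert y\Vert^2$. It then remains to pass from $v$ to $w=(\xi\c{i}-\bar x/2)_{i=0,\dotsc,N}$. The definitions \eqref{eq:trafo} and \eqref{eq:fast-variables} constitute a fixed invertible linear change of coordinates mapping the common equilibrium $\xi\c{i}=\bar x/2$ to $x=\bar x$, $y=0$, so the two shifted states obey $\underline\kappa\Vert w\Vert\leq\Vert v\Vert\leq\bar\kappa\Vert w\Vert$ for positive constants $\underline\kappa,\bar\kappa$. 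Invoking the monotonicity of the $\mathcal{KL}$ bound exactly as in the final back-transformation step of the proof of Theorem~\ref{theorem:boundedness}, and taking square roots to pass from $\Vert w\Vert^2$ to $\Vert w\Vert$, gives \eqref{eq:bound-popmodel} with a new class $\mathcal{KL}$ function $\beta$ and $\mu(\varepsilon)\rightarrow 0$ as $\varepsilon\rightarrow 0$.
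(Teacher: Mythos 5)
Your proposal follows essentially the same route as the paper's proof: apply Theorem~\ref{theorem:boundedness} to the transformed model \eqref{eq:popmodel-transformed} after shifting to the equilibrium $\bar x$, verifying Assumption~1 via $h(x)=x/2$ with $c_1=\sqrt N/2$, Assumption~2 by hypothesis, Assumption~3 from the linearity of the boundary-layer model and Lemma~\ref{lem:algebraic-solution}, and Assumption~4 from the global Lipschitz property of $F$ together with \eqref{eq:vs-extra-condition} and $\partial V_B/\partial x = 0$. Your write-up is correct and in fact supplies details the paper leaves implicit, namely the explicit choice $V_B(y)=\tfrac12\Vert y\Vert^2$ giving $c_3 = 1/V_c$, the concrete estimates producing $c_4,\dotsc,c_{12}$ (with $c_9$ traced to the offset $F(\bar x/2)=k_d\bar x/2$, which is exactly why the classical sector-bounded theory fails), and the norm-equivalence argument for the final back-transformation to $w$.
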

\begin{proof}
  The result is a direct consequence of Theorem~\ref{theorem:boundedness}.
  In the proof, it remains to verify that the transformed system~\eqref{eq:popmodel-transformed} satisfies the assumptions made for theorem \ref{theorem:boundedness}.
  
  For assumption 1, we have shown previously that the system~\eqref{eq:popmodel-transformed} has a quasi-steady state solution $z\c{i} = x/2$, $i=1,\dotsc,N$.
  The norm bound in assumption 1 is thus
  \begin{equation*}
    c_1 \geq \Vert \frac{\partial h}{\partial x}(x) \Vert = \frac{\sqrt{N}}{2}.
  \end{equation*}
  From the assumption in Proposition~\ref{prop:population-dynamics}, $(\bar x, h(\bar x))$ is an equilibrium point of the system~\eqref{eq:popmodel-transformed}.
  
  Assumption 2 from Section~\ref{sec:bound-singular-perturbation} is a condition in Proposition~\ref{prop:population-dynamics}, transformed to the non-zero equilibrium point.

  Assumption 3 concerns the boundary layer model~\eqref{eq:boundary-model}.
  This is a linear model, and with Lemma~\ref{lem:algebraic-solution} we can conclude that its dynamics matrix has $n$ eigenvalues at $-2/V_c$ and $n(N-1)$ eigenvalues at $-1/V_c$.
  It follows that there exists a Lyapunov function $V_B$ for the boundary layer model which satisfies assumption 3 in Section~\ref{sec:bound-singular-perturbation}.
  Also, since the boundary layer model does not depend on $x$, $V_B$ can be chosen such that $\partial V_B/\partial x = 0$.

  For assumption 4, we make use of the conditions that $F$ is globally Lipschitz and thus $f,g$ are globally Lipschitz in $x$, $z$, and $\varepsilon$, the derivative of $V_S(x)$ is bounded by~\eqref{eq:vs-extra-condition}, and $\partial V_B/\partial x = 0$.
  From these conditions, one can directly verify that there exist constants $c_4, \dotsc, c_{12}$ such that the inequalities~\eqref{eq:lyap3-8} are satisfied.

  The bound~\eqref{eq:bound-popmodel} is then a direct consequence of Theorem~\ref{theorem:boundedness}.
\end{proof}

\section{A POPULATION OF COUPLED SWITCHES}
\label{sec:simulation}

\subsection{Modeling and analysis}
\label{sec:switch-modeling-analysis}

The dynamics of a population of coupled cells become particularly interesting for intracellular dynamics with non-linear behavior such as bistability or oscillations.
Here, we focus on bistability, which has been thouroughly studied in the context of intracellular biological networks \cite{CherryAdl2000,FerrellXio2001,EissingWal2007}.
A typical model for an intracellular switch is given by the scalar differential equation \cite{WaldherrAll2010a}
\begin{equation}
  \label{eq:switch-ode}
  \dot \xi = F(\xi) = \frac{3\xi^2}{1+\xi^2} - \xi,
\end{equation}
where $\xi\in\Real$ denotes the expression level of an auto-activating gene.

We consider a system where the auto-activating gene product can be exchanged among cells.
Using the modeling approach presented in Section~\ref{sec:cp-dynamics}, the resulting population of coupled switches will be described by the system
\begin{equation}
  \label{eq:coupled-switches}
  \begin{aligned}
    \dot \xi\ext &= \sum_{j=1}^N \frac{\diffparam}{N V_c} (\xi\c{j} - \xi\ext) \\
    \dot \xi\c{i} &= \frac{3(\xi\c{i})^2}{1+(\xi\c{i})^2} - \xi\c{i} - \frac{\diffparam}{V_c} (\xi\c{i} - \xi\ext), \quad i=1,\dotsc,N,
  \end{aligned}
\end{equation}
where the degradation in the extra-cellular medium has been set to zero.

From the theoretical results in Section \ref{sec:cp-dynamics}, one can conclude that the population of coupled switches behaves as a single switch for sufficiently fast diffusive exchange: 
due to the fast coupling, the individual cells would quickly reach a state where all have approximatively the same concentration $\xi\c{1} \approx \xi\c{2} \approx \cdots \approx \xi\c{N}$, and from then on the dynamics would be governed by the slow bistable equation
\begin{equation}
  \label{eq:slow-switch-dynamics}
  \dot x = \frac{3 \frac{x^2}{4}}{1 + \frac{x^2}{4}} - \frac{x}{2}.
\end{equation}
The initial condition of the slow model is computed from the initial condition for the full model~\eqref{eq:switch-ode} by the coordinate transformation \eqref{eq:trafo}, yielding
\begin{equation}
  \label{eq:slow-switch-ic}
  x(0) = \xi\c{0}(0) + \frac{1}{N} \sum_{i=1}^N \xi\c{i}(0)
\end{equation}
The inverse transformation
\begin{equation}
  \label{eq:slow-switch-backtrafo}
  \xi\c{i}(t) = \frac{x(t)}{2}, \qquad i = 0, \dotsc, N
\end{equation}
can be used to obtain an approximate solution of the original population model~\eqref{eq:coupled-switches} from a solution of the slow model~\eqref{eq:slow-switch-dynamics}.

Motivated from these theoretical considerations, we perform a simulation study with a range of diffusion rate values and heterogeneous initial conditions for the switch population~\eqref{eq:coupled-switches}.
Simulation results for $N=10$ are shown in Figures~\ref{fig:sim-med-x0} and \ref{fig:sim-high-x0}.
The solutions for $\varepsilon = 0$ have been obtained by simulating the reduced model~\eqref{eq:slow-switch-dynamics} and using \eqref{eq:slow-switch-backtrafo} to compute the approximate solution for the full model.
By construction, this approximate solution has identical values for all individual switches.

As indicated by the theoretical considerations, we in fact observe switch synchronization and population level bistability for the coupled switch model~\eqref{eq:coupled-switches}: for sufficiently small $\varepsilon$, the states of individual switches quickly converge to each other, and then, depending on the initial conditions, converge all together to either the upper or the lower stable equilibrium point.

In contrast, for larger values of $\varepsilon$, individual switches may converge to different stable equilibrium points, depending on their individual initial condition.
An example for this is shown in Figure~\ref{fig:sim-high-x0} with $\varepsilon = 20$.

\begin{figure}
  \centering
  \includegraphics[width=7.2cm]{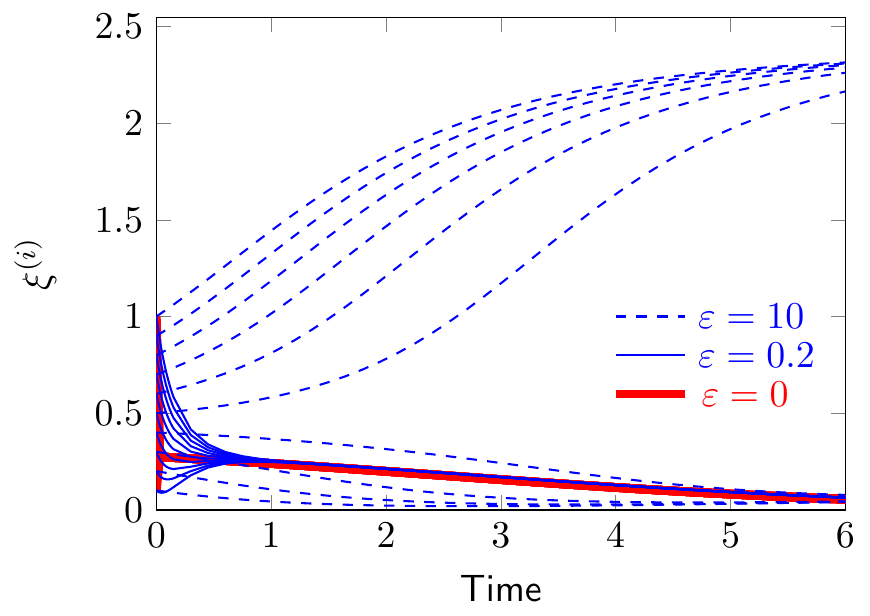}
  \caption{Simulation of the coupled switch population with $N=10$ for different values of $\varepsilon$ and initial conditions ranging from 0.1 to 1.}
  \label{fig:sim-med-x0}
\end{figure}

\begin{figure}
  \centering
  \includegraphics[width=7.2cm]{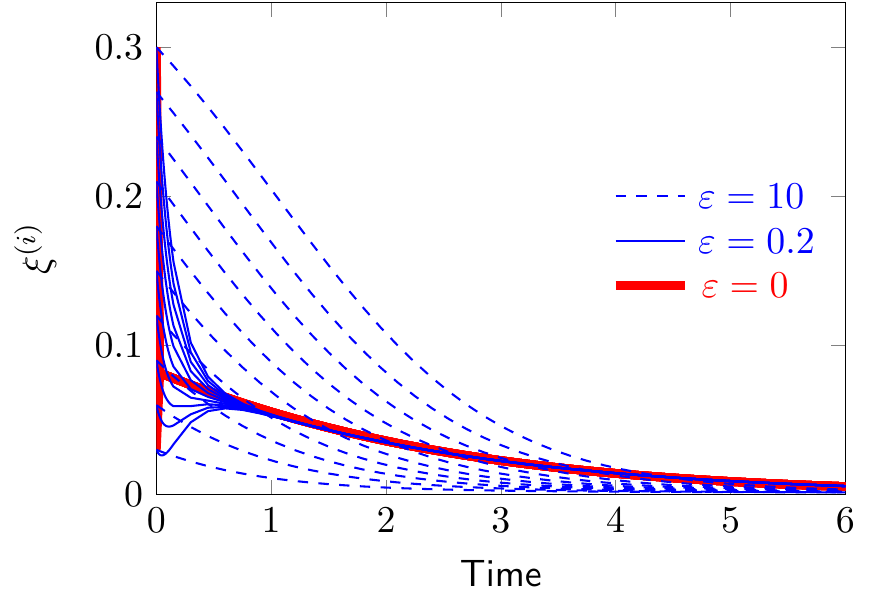}
  \caption{Simulation of the coupled switch population with $N=10$ for different values of $\varepsilon$ and initial conditions ranging from 0.03 to 0.3.}
  \label{fig:sim-lo-x0}
\end{figure}

\begin{figure}
  \centering
  \includegraphics[width=7.2cm]{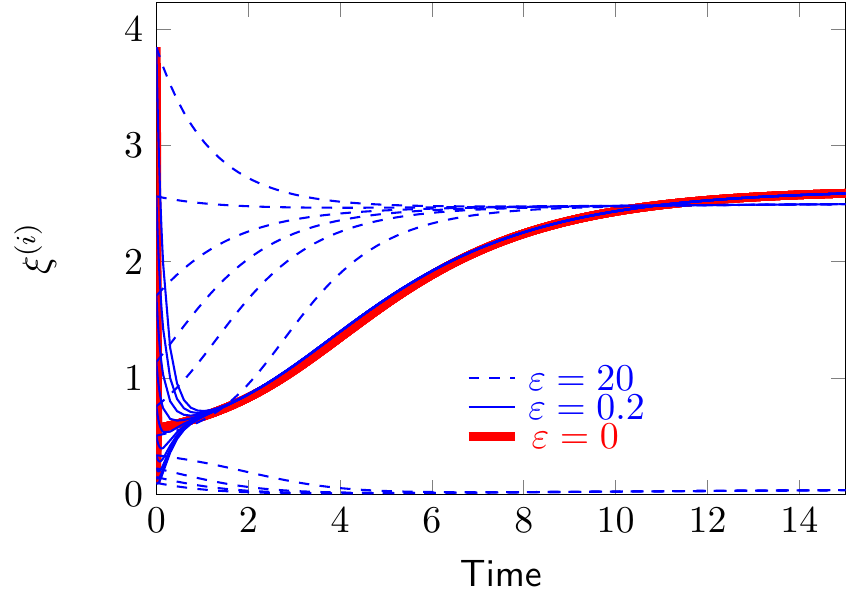}
  \caption{Simulation of the coupled switch population with $N=10$ for different values of $\varepsilon$ and initial conditions ranging from 0.1 to 4.}
  \label{fig:sim-high-x0}
\end{figure}

\subsection{Discussion of biological relevance}
\label{sec:switch-discussion}

As shortly discussed in the introduction, the relevance of bistable dynamics in gene regulation and biochemical signal transduction is very well established.
However, the relation between bistable intracellular dynamics and population level behavior of coupled cells has not been well studied so far.

Motivated from the theoretical results in Sections~\ref{sec:bound-singular-perturbation} and \ref{sec:cp-dynamics}, the simulation study described in this section indicates the occurence of population-level bistablity for fast enough diffusive exchange.
We speculate that this type of coupling may be a relevant biological mechanism to provide robustness of cellular decisions within a cell population under cellular heterogeneity and molecular noise.

Coordination among bistable switches in a cell population may for example ensure that a cluster of cells within a developing tissue differentiates together, even if not all cells receive a high enough stimulus to initiate single cell diferentiation.
Alternatively, this mechanism may be useful to extend engineered genetic switches on the single-cell level \cite{GardnerCan2000} to coordinated switching of a whole cell population using synthetic biology techniques.

\section{CONCLUSIONS}

We have proven convergence to a common equilibrium point for populations of cells which are coupled via diffusive exchange of molecules with a common extracellular medium, under the assumption of sufficiently fast diffusive exchange rates compared to the intracellular dynamics.
While we have considered only homogenous intracellular models in this paper, the results extend directly to heterogeneous cells.

A study of coupled bistable switches showed coordinated behavior and population-level bistability among the switches.
This has potential relevance for the robustness of tissue differentiation and for synthetic biology.



\section*{APPENDIX}

\begin{lemma}
  \label{lemma:B-negative}
  Let
  \begin{equation}
    \label{eq:B-def}
    B(x,y,\varepsilon) = k_1(\varepsilon) \normy - k_2 \normx^2 - \frac{k_3}{\varepsilon} \normy^2
  \end{equation}
  with $k_1$, $k_2$, $\varepsilon > 0$.
  If
  \begin{equation}
    \label{eq:xy-norm-bound}
    \normx^2 + \normy^2 \geq \frac{\varepsilon^2 (k_1(\varepsilon))^2}{k_3^2} + \frac{(k_1(\varepsilon))^2 \varepsilon}{4 k_2 k_3},
  \end{equation}
  then $B(x,y,\varepsilon) \leq 0$.
\end{lemma}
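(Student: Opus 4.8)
The plan is to prove the contrapositive. Assuming $B(x,y,\varepsilon) > 0$, I would show that $\normx^2 + \normy^2$ is \emph{strictly} smaller than the right-hand side of \eqref{eq:xy-norm-bound}; equivalently, whenever \eqref{eq:xy-norm-bound} holds, $B$ cannot be positive. Throughout I treat $B$ as a function of the two nonnegative scalars $\normx$ and $\normy$ (with $k_3 > 0$ as in the intended application), and the key structural observation is that $B$ is a downward-opening parabola in $\normy$ and strictly decreasing in $\normx^2$. This is what lets the two contributions to the threshold be obtained by two separate, one-sided relaxations of the single inequality $B > 0$.

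For the first term of \eqref{eq:xy-norm-bound}, I would simply discard the nonpositive summand $-k_2 \normx^2$: from $B > 0$ one gets $\frac{k_3}{\varepsilon}\normy^2 < k_1(\varepsilon)\normy$, so that (the case $\normy = 0$ being trivial, as it forces $B = -k_2\normx^2 \le 0$) dividing by $\normy$ yields $\normy < \frac{\varepsilon k_1(\varepsilon)}{k_3}$, hence $\normy^2 < \frac{\varepsilon^2 (k_1(\varepsilon))^2}{k_3^2}$. This matches the first term exactly.

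For the second term I would instead isolate $\normx$: from $B > 0$ we have $k_2 \normx^2 < k_1(\varepsilon)\normy - \frac{k_3}{\varepsilon}\normy^2$, and I would bound the right-hand side by its maximum over $\normy$. Completing the square (the vertex sits at $\normy = \frac{\varepsilon k_1(\varepsilon)}{2 k_3}$) gives $k_1(\varepsilon)\normy - \frac{k_3}{\varepsilon}\normy^2 \le \frac{\varepsilon (k_1(\varepsilon))^2}{4 k_3}$, whence $\normx^2 < \frac{\varepsilon (k_1(\varepsilon))^2}{4 k_2 k_3}$, which is precisely the second term. Adding the two strict bounds produces $\normx^2 + \normy^2 < \frac{\varepsilon^2 (k_1(\varepsilon))^2}{k_3^2} + \frac{\varepsilon (k_1(\varepsilon))^2}{4 k_2 k_3}$, contradicting \eqref{eq:xy-norm-bound} and closing the argument.

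The argument is entirely elementary, so there is no real technical obstacle; the only point worth flagging is organizational. The two summands in the threshold do not arise from a single joint estimate but from two distinct relaxations of $B>0$ — dropping $-k_2\normx^2$ to bound $\normy$, and maximizing the $\normy$-parabola to bound $\normx$ — and recognizing this decoupling is what makes the bound tight and the proof short.
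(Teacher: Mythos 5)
Your proof is correct and rests on exactly the same two ingredients as the paper's own proof --- the positive root $\normy = \varepsilon k_1(\varepsilon)/k_3$ and the vertex value $\varepsilon (k_1(\varepsilon))^2/(4k_3)$ of the downward parabola $k_1(\varepsilon)\normy - (k_3/\varepsilon)\normy^2$ --- so it is essentially the same argument in contrapositive form: the paper organizes it as a direct case split on whether $\normy^2$ exceeds the first threshold term (Case 1 uses the root bound; Case 2 uses the hypothesis to force $\normx^2 \geq \varepsilon (k_1(\varepsilon))^2/(4 k_2 k_3)$ and then the completed square), while you derive both coordinate bounds simultaneously from $B>0$ and contradict the hypothesis by summing. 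Nothing is missing.
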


\begin{proof}
  \textbf{Case 1}
  Let $\normy^2 \geq \frac{\varepsilon^2 (k_1(\varepsilon))^2}{k_3^2}$.
  Then
  \begin{equation*}
      B(x,y,\varepsilon) \leq k_1(\varepsilon) \normy (1-\frac{k_3}{\varepsilon k_1(\varepsilon)} \normy) \leq 0.
  \end{equation*}
  \textbf{Case 2}
  Let $\normy^2 \leq \frac{\varepsilon^2 (k_1(\varepsilon))^2}{k_3^2}$.
  Then $\normx^2 \geq \frac{(k_1(\varepsilon))^2 \varepsilon}{4 k_2 k_3}$ and, by quadratic extension,
  \begin{equation*}
    \begin{aligned}
      B(x,y,\varepsilon) &= -k_2 \normx^2 - \frac{k_3}{\varepsilon} (\normy^2 - \frac{k_1(\varepsilon) \varepsilon}{2 k_3})^2 + \frac{(k_1(\varepsilon))^2 \varepsilon}{4 k_3} \\
      &\leq - k_2 \normx^2 + \frac{(k_1(\varepsilon))^2 \varepsilon}{4 k_3} \leq 0.
    \end{aligned}
  \end{equation*}
\end{proof}

\section*{ACKNOWLEDGMENT}

We thank Gerd Simon Schmidt and Jingbo Wu for helpful discussions of this research.


\bibliographystyle{plain}
\bibliography{/home/ist/waldherr/research/Referenzen.bib}

\end{document}